\pgfplotsset{compat=newest}
\newtheorem{theorem}{Theorem}
\newtheorem{lemma}{Lemma}
\newtheorem{proof}{Proof}
\newcommand\blfootnote[1]{%
  \begingroup
  \renewcommand\thefootnote{}\footnote{#1}%
  \addtocounter{footnote}{-1}%
  \endgroup
}
\title{Differential Equation Approximations for Population Games using Elementary Probability}
\author{Semih Kara, Nuno C. Martins}
\begin{document}

\maketitle
\blfootnote{This work was supported by AFOSR FA95502310467 and NSF ECCS 2139713, NSF CNS 213556.}
\blfootnote{The authors are with the Electrical and Computer Engineering Department, and the Institute for Systems Research, University of Maryland, College Park, MD 20742, USA.{\tt\small \{skara, nmartins\}@umd.edu.}}

\begin{abstract}
    Population games model the evolution of strategic interactions among a large number of uniform agents. Due to the agents' uniformity and quantity, their aggregate strategic choices can be approximated by the solutions of a class of ordinary differential equations. This mean-field approach has found to be an effective tool of analysis. However its current proofs rely on advanced mathematical techniques, making them less accessible. In this article, we present a simpler derivation, using only undergraduate-level probability.

    
    
\end{abstract}

\section{Introduction}
\label{sec:Intro}

Population games is a framework for dynamic strategic interactions of many agents; finding applications in diverse fields such as traffic management \citep{Smith1984The-stability-o}, electricity demand regulation \citep{Resilient_Distributed_Real-Time_Demand_Response_via_Population_Games_Srikantha_Kundur,Quijano2017The-role-of-pop}, distributed task allocation \citep{Multi-Robot_Task_Allocation_Games_in_Dynamically_Changing_Environments_Park_et_al,Population_Games_With_Erlang_Clocks_Convergence_to_Nash_Equilibria_For_Pairwise_Comparison_Dynamics_Kara_Martins_Arcak}, distributed extremum seeking \citep{Shahshahani_gradient-like_extremum_seeking_Poveda_Quijano,Distributed_Population_Dynamics_Optimization_and_Control_Applications_Barreiro-Gomez_et_al}, communication networks \citep{Evolutionary_dynamics_and_potential_games_in_non-cooperative_routing_Altman_et_al}, and more \citep{Sandholm2010Population-Game,Park2018Payoff-Dynamic-}.

In this framework, there is a large number of $N$-many uniform agents, each choosing a strategy from a set of $n$ options. Because that the agents are uniform, an important quantity is the fractions of the population following each strategy, denoted as $X^N=(X^N_1,\dots,X^N_n)$ and referred to as the population state. Every strategy has a payoff that quantifies the utility that an agent would receive upon choosing it.

The agents repeatedly revise their strategies by taking $X^N$ and the payoffs into account. Each agent has its own independent and identical Poisson clock, which dictates when it revises its strategy. When a revision occurs, the agent uses a learning rule to modify its strategy, causing $X^N$ to vary over time. The resulting $X^N$ is a right-continuous pure-jump Markov process. We give a more detailed description of the framework in Section~\ref{sec:Pop_Games_with_Fin_Many_Agents}.

\subsection{Population Games and Mean-Field Approximation}
\label{subsec:Pop_Games_and_MFA}

Many research on population games, including the references above, rely on the mean-field approximation of $X^N$. In population games, ``mean-field'' signifies aggregating the effects of individual agents. The key idea is that the agents' homogeneity cause the randomness that they introduce to the aggregate behavior to balance out as $N\to\infty$. Hence, the population's aggregate strategic behavior, captured by $X^N$, is well-approximated by a deterministic trajectory. The figure below illustrates this idea.
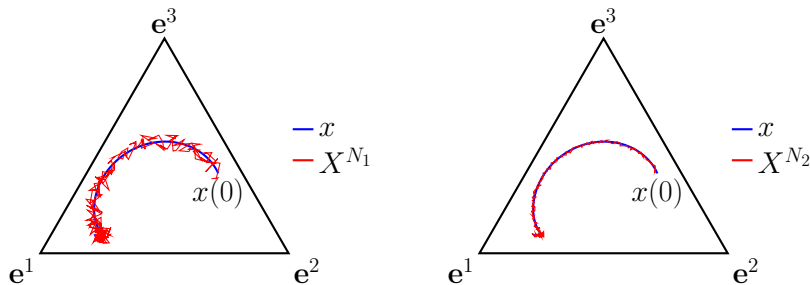
\begin{figure}[ht]
\centering
\begin{tikzpicture}[scale=0.55,transform shape]
    \def\a{6} 
    \def\T{100} 
    \def\f{0.5} 
    \def\rZero{\a/12} 
    \def\xOffset{0.05 * \a} 
    \def\yOffset{0.75 * sqrt(3)*\a/6} 
    \coordinate (A) at (0,0);
    \coordinate (B) at (\a,0);
    \coordinate (C) at (\a/2, {sqrt(3)*\a/2});
    \coordinate (Center) at ($1/3*(A)+1/3*(B)+1/3*(C)$);
    \draw[thick] (A) -- (B) -- (C) -- cycle;
    \node[below left, font=\huge] at (A) {$\mathbf{e}^1$};
    \node[below right, font=\huge] at (B) {$\mathbf{e}^2$};
    \node[above, font=\huge] at (C) {$\mathbf{e}^3$};
    \coordinate (X0) at ({(\a/4) * (1 + cos(2*pi*\f*0)) + \a/6 + \xOffset},
                          {(\a/4) * (1 + sin(2*pi*\f*0)) + sqrt(3)*\a/6 - \yOffset});
    \node[below, font=\huge] at (X0) {$x(0)$};
    \draw[blue, thick] plot[variable=\t, domain=0:\T, samples=200, smooth] 
        ({(\a/4) * (1 - \t/\T) * (1 + cos(2*pi*\f*\t)) + \a/6 + \xOffset},
         {(\a/4) * (1 - \t/\T) * (1 + sin(2*pi*\f*\t)) + sqrt(3)*\a/6 - \yOffset});
    \draw[red] plot[variable=\t, domain=0:\T, samples=200] 
        ({(\a/4) * (1 - \t/\T) * (1 + cos(2*pi*\f*\t)) + \a/6 + \xOffset + 0.2*rand},
         {(\a/4) * (1 - \t/\T) * (1 + sin(2*pi*\f*\t)) + sqrt(3)*\a/6 - \yOffset + 0.2*rand});
    \coordinate (Legend) at (\a, \a/2);
    \draw[blue, thick] ($(Legend)+(0.1,0)$) -- ++(0.5,0) node[right, black, font=\huge] {$x$};
    \draw[red, thick] ($(Legend)+(0.1,-0.75)$) -- ++(0.5,0) node[right, black, font=\huge] {$X^{N_1}$};
\end{tikzpicture}\hspace{0.3in}
\begin{tikzpicture}[scale=0.55,transform shape]
    \def\a{6} 
    \def\T{100} 
    \def\f{0.5} 
    \def\rZero{\a/12} 
    \def\xOffset{0.05 * \a} 
    \def\yOffset{0.75 * sqrt(3)*\a/6} 
    \coordinate (A) at (0,0);
    \coordinate (B) at (\a,0);
    \coordinate (C) at (\a/2, {sqrt(3)*\a/2});
    \coordinate (Center) at ($1/3*(A)+1/3*(B)+1/3*(C)$);
    \draw[thick] (A) -- (B) -- (C) -- cycle;
    \node[below left, font=\huge] at (A) {$\mathbf{e}^1$};
    \node[below right, font=\huge] at (B) {$\mathbf{e}^2$};
    \node[above, font=\huge] at (C) {$\mathbf{e}^3$};
    \coordinate (X0) at ({(\a/4) * (1 + cos(2*pi*\f*0)) + \a/6 + \xOffset},
                          {(\a/4) * (1 + sin(2*pi*\f*0)) + sqrt(3)*\a/6 - \yOffset});
    \node[below, font=\huge] at (X0) {$x(0)$};
    \draw[blue, thick] plot[variable=\t, domain=0:\T, samples=200, smooth] 
        ({(\a/4) * (1 - \t/\T) * (1 + cos(2*pi*\f*\t)) + \a/6 + \xOffset},
         {(\a/4) * (1 - \t/\T) * (1 + sin(2*pi*\f*\t)) + sqrt(3)*\a/6 - \yOffset});
    \draw[red] plot[variable=\t, domain=0:\T, samples=200] 
        ({(\a/4) * (1 - \t/\T) * (1 + cos(2*pi*\f*\t)) + \a/6 + \xOffset + 0.05*rand},
         {(\a/4) * (1 - \t/\T) * (1 + sin(2*pi*\f*\t)) + sqrt(3)*\a/6 - \yOffset + 0.05*rand});
    \coordinate (Legend) at (\a, \a/2);
    \draw[blue, thick] ($(Legend)+(0.1,0)$) -- ++(0.5,0) node[right, black, font=\huge] {$x$};
    \draw[red, thick] ($(Legend)+(0.1,-0.75)$) -- ++(0.5,0) node[right, black, font=\huge] {$X^{N_2}$};
\end{tikzpicture}
\caption{Effect of increasing $N$ on the approximation: $N_2 > N_1$, yielding a trajectory closer to the deterministic approximation.}
\end{figure}

More rigorously, this approximation has the following form \citep{Sandholm2003Evolution-and-e,Benaim2003Deterministic-a}: Given any $\epsilon,T>0$, if $\lim_{N\to\infty}X^N(0)=\mathrm{x}_0$, then
\begin{align}
\lim_{N\to\infty}\mathbb{P}\left( \sup_{t\in [0,T]} \|X^N(t)-x(t)\|_2 >\epsilon \right)=0 \label{eq:lim_to_0}
\end{align}
where $x$, with $x(0)=\mathrm{x}_0$, is the solution of an ordinary differential equation (ODE) characterized by $\mathbb{E}[X^N]$. There are also mean-field results that associate the stationary distributions of $X^N$ to the recurrent points of the ODE \citep{Sandholm2010Population-Game,Benaim1998Recursive-algor,Benaim2003Deterministic-a}. Nevertheless, in this paper, our focus is on approximations within a finite interval--in the sense of \eqref{eq:lim_to_0}.

\subsection{Towards a Simpler Proof}
\label{subsec:Need_for_a_Simpler_Proof}

Mean-field approximation for population games have a rich history. In the concluding notes of the chapter, \citet[Chapter~10]{Sandholm2010Population-Game} provides a concise overview of this history, which we reiterate here: The initial results were obtained under additional assumptions on the agents' behavior. \cite{Boylan1995Continuous-appr} demonstrated how evolutionary processes, driven by random matching, converge to deterministic trajectories as the population size grows. Following suit, \citet{Binmore1995Musical-chairs}, as well as \citet{Borges1997Learning-throug}, and \citet{Schlag_Why_Imitate_and_If_So_How_A_Boundedly_Rational_Approach_to_Multi-armed_Bandits}, studied evolutionary models that converge to the so-called replicator dynamics. \cite{Binmore_Samuelson_Evolutionary_Drift_and_Equilibrium_Selection} established a broader method of approximation, however their findings cover discrete-time processes that impose additional assumptions on the agents' revision times. A general result of type \eqref{eq:lim_to_0} was reported by \citet{Sandholm2003Evolution-and-e}, which is based on the seminal paper of \citet{Kurtz1970Solutions-of-Or}. Particularly, \citet[Theorem~2.1]{Kurtz1970Solutions-of-Or} showed that some classes of Markov processes can be approximated by solutions of ODEs and \citet{Sandholm2003Evolution-and-e} proved that $X^N$ belongs to this class, culminating in \eqref{eq:lim_to_0}. While Kurtz's results greatly extend beyond this application, their generality brings two setbacks: (i) They only guarantee that the approximation error approaches 0 in the limit, with no specific bound for the non-limit case. (ii) Their proofs involve advanced mathematical techniques, making them less accessible. The strongest known contribution is due to \citet{Benaim2003Deterministic-a}, which revealed an exponential bound (decaying with rate $\epsilon^2N$) on the probability in \eqref{eq:lim_to_0}. Unfortunately, they also employ advanced mathematical techniques and leave a significant portion of the derivations to the reader, including calculating the exact expression of the bound.

There is a recent body of research \citep{Park2018Payoff-Dynamic-,Kara2021Pairwise-Compar,Excess_Payoff_Evolutionary_Dynamics_With_Strategy-Dependent_Revision_Rates_Convergence_to_Nash_Equilibria_for_Potential_Games_Kara_Martins,Population_Games_With_Erlang_Clocks_Convergence_to_Nash_Equilibria_For_Pairwise_Comparison_Dynamics_Kara_Martins_Arcak} that has extended the population games framework and adjusted the mean-field approach of \citet{Sandholm2003Evolution-and-e} accordingly. Nonetheless, their approximation results essentially remain as applications of \citet[Theorem~2.1]{Kurtz1970Solutions-of-Or}. Note that, in this paper, we focus on the standard setting that we explain in Section~\ref{sec:Pop_Games_with_Fin_Many_Agents}, not these extensions.

Although \citet{Benaim2003Deterministic-a} and \citet{Kurtz1970Solutions-of-Or} (the latter through \citet{Sandholm2003Evolution-and-e}) prove that \eqref{eq:lim_to_0} holds, the concerns that we expressed above hinder their appeal to a general audience. In this paper, we adopt an educational stance to address these concerns: Our aim is to make the framework more accessible by giving an alternative proof of \eqref{eq:lim_to_0} and presenting an explicit bound for the non-limit case, using only undergraduate-level probability. We emphasize that we are \textit{not} strengthening the results of \citet{Benaim2003Deterministic-a} and \citet{Kurtz1970Solutions-of-Or}.

The remainder of this paper has two main parts: In Section~\ref{sec:Pop_Games_with_Fin_Many_Agents} we go over the population games framework and in Section~\ref{sec:Main_Res} we present our (simple) mean-field approximation result.

\section{Population Games with Finitely Many Agents}
\label{sec:Pop_Games_with_Fin_Many_Agents}

In this section, we present an overview of the population games and evolutionary dynamics framework. For the sake of discussion, we break the framework into two principal components:
\begin{itemize}
\item The strategic environment through which the agents interact (e.g. the set of available strategies, payoffs, and the indistinguishability of the agents).
\item When and how the agents decide on their strategies.
\end{itemize}

\subsection{The Strategic Environment}
\label{subsec:Strat_Env}

Consider a population of $N\in\mathbb{N}$ nondescript agents (meaning that the agents are indistinguishable). At any time $t\geq 0$, each agent follows a single strategy from the set $\{1,\dots,n\}$. For each strategy $i\in\{1,\dots,n\}$, we denote
\begin{align*}
X^N_i(t) := \frac{\text{Number of agents following strategy $i$ at time $t$}}{N},
\end{align*}
that is, $X^N_i(t)$ is the fraction of agents playing $i$ at time $t$. Moreover, we denote $X^N := (X^N_1,\dots,X^N_n)$, which is commonly referred to as the population state. Since the agents are nondescript, $X^N$ contains all the relevant information about the population's strategy profile. Observe that $X^N$ takes values in the discrete simplex $\mathbb{X}^N := \{ \mathrm{x}\in\mathbb{R}^n \ | \ N\mathrm{x}_i \in \{0, \ldots, N\}~\text{and}~\sum_{j=1}^n \mathrm{x}_j=1 \}$.

At any time $t\geq 0$, each strategy $i\in\{1,\dots,n\}$ has a payoff $P_i(t)$ that is assigned by a payoff mechanism. The payoff mechanism determines these payoffs by coupling the agents' strategies through $X^N$. Although the framework allows for more general payoff mechanisms \citep{Park2018Payoff-Dynamic-}, as the main objective of this article is tractability, we confine our focus to the memoryless case. So, we assume that a Lipschitz continuous function $\mathcal{F}:\Delta\to\mathbb{R}^n$ specifies the payoffs as $P(t):=(P_1(t),\dots,P_n(t))=\mathcal{F}(X^N(t))$, where $\Delta:=\{ \mathrm{x}\in\mathbb{R}^n \ | \ 0\leq \mathrm{x}_i\leq 1\}~\text{and}~\sum_{j=1}^n \mathrm{x}_j=1 \}$ is the $(n-1)$-dimensional simplex.

\subsection{The Revision Paradigm}
\label{subsec:Rev_Par}

The agents repeatedly revise their strategies in response to the strategic environment. In what follows, we characterize when and how these revisions happen.

\subsubsection{Revision Times}
\label{subsubsec:Rev_Times}

The revision times are characterized by Poisson processes: The agents possess independent and identical Poisson clocks (processes) with rate $\lambda$, where a ``tick'' (jump) in an agent's clock indicates a strategy revision by that agent. 

Consequently, the number of revisions performed by the entire population is the sum of the agents' clocks, which is a Poisson process with rate $\lambda N$. Given any $\bar{t}>\underline{t}\geq 0$, we denote
\begin{align*}
&\mathbb{I}([\underline{t},\bar{t}]) := \; \parbox{0.52\textwidth}{The number of revisions performed by the population over the time interval $[\underline{t},\bar{t}]$.}
\end{align*}
From the properties of Poisson processes, it follows for every $t,\delta>0$ that
\begin{align}
& \mathbb{P}( \mathbb{I}([t,t+\delta]) = 0 ) = e^{-N\lambda\delta}, \tag{$\mathbb{I}_0$}\\
& \mathbb{P}( \mathbb{I}([t,t+\delta]) = 1 ) = N\lambda\delta e^{-N\lambda\delta}, \tag{$\mathbb{I}_1$}\\
& \mathbb{P}( \mathbb{I}([t,t+\delta]) \geq 2 ) = 1-(1+N\lambda\delta)e^{-N\lambda\delta}. \tag{$\mathbb{I}_{\geq 2}$}
\end{align}
For notational convenience, we define
\begin{align*}
& \mathcal{M}^N(\delta) := 1-(1+N\lambda\delta)e^{-N\lambda\delta}.
\end{align*}
In the remainder of this paper, we frequently use ($\mathbb{I}_0$), ($\mathbb{I}_1$), ($\mathbb{I}_{\geq 2}$) and the fact that \begin{align}
\label{eq:MN_is_o(delta)}
\lim_{\delta\to 0^+} \frac{\mathcal{M}^N(\delta)}{\delta^2}=\frac{(N\lambda)^2}{2}.
\end{align}

\subsubsection{Strategy Switching Probabilities}
\label{subsubsec:Strat_Sw_Probs}

Having characterized the revision times, now we describe how a revising agent acts. Suppose that an agent's clock ticks at $t^*$. Then, by the properties of the Poisson process, there exists $\delta>0$ such that $\mathbb{P}( \mathbb{I}([t^*-\delta,t^*))=0 )=1$. That is, there is an epoch of length $\delta$ that precedes $t^*$ during which no revision occurs.

Let us denote the strategy (at $t^*-\delta$) of the revising agent by $i$. We assume that this agent chooses to play strategy $j$ with probability $\mathbb{P}(i\rightsquigarrow j \ | \ i)$, which can depend on $X^N(t^*-\delta)$ and $P(t^*-\delta)$. The realization of the agent's decision becomes its strategy at $t^*$. Thus, if the agent decides to play strategy $j$, then 
\begin{align*}
X^N(t^*) - X^N(t^*-\delta) = \frac{\mathbf{e}^j-\mathbf{e}^i}{N},
\end{align*}
where $\mathbf{e}^i$ denotes the $i$-th canonical basis vector of $\mathbb{R}^n$. 
    
Since the agents' clocks are independent and identical, the probability that an $i$-strategist performs the revision at $t^*$ is $X_i^N(t^*-\delta)$. Therefore, unconditional on the origin strategy of the revising agent, the revision results in a switch from $i$ to $j$ with probability $X_i^N(t^*-\delta)\mathbb{P}(i\rightsquigarrow j \ | \ i)$. Figure~\ref{fig:str_sw_timeline} illustrates the revision timeline near $t^*$.
\begin{figure}[ht]
\centering
    \begin{tikzpicture}[scale=0.9]
        \tikzstyle{every node}=[]
        \node (label) at (0,0.2\textheight) {\large $t$};
        \node (t_start) at (label.east) {};
        \node (t_end) at (0.97\textwidth,0.2\textheight) {};
        \draw[->] (t_start) -- (t_end);
        \node[right=0.3 of t_start] (rt1) {};
        \node[right=2.4 of rt1] (rt2) {};
        \node[right=2.1 of rt2] (rt3) {};
        \node[left=0.3 of t_end] (rt4) {};
        \draw[line width=2.5pt, color=green] (rt1)++(0,-0.2) -- ++(0,0.4);
        \draw[line width=2.5pt, color=green] (rt2)++(0,-0.2) -- ++(0,0.4);
        \draw[line width=2.5pt, color=green] (rt3)++(0,-0.2) -- ++(0,0.4);
        \draw[line width=2.5pt, color=green] (rt4)++(0,-0.2) -- ++(0,0.4);
        \node[left={0.65 of rt2}, below={0.65 of rt2}] (bn2) {};
        \node[left={0.65 of rt3}, below={0.65 of rt3}] (bn3) {};
        \node[above=8pt] at ($(rt1)!0.5!(rt2)$) {$\sim\exp(\lambda N)$};
        \node[above=8pt] at ($(rt2)!0.5!(rt3)$) {$\sim\exp(\lambda N)$};
        \node[above=8pt] at ($(rt3)!0.5!(rt4)$) {$\sim\exp(\lambda N)$};
        \node[below=6pt] at (rt3) {$t^*$};
        \node[below=6pt] at (rt2) {$t^*-\delta$};
        \draw[decorate,decoration={brace,amplitude=7pt,mirror}] (bn2) -- (bn3) node[midway, below=8pt,align=center] {Population's previous\\ revision occurs at $t^*-\delta$,\\so $X^N$ is constant over $[t^*-\delta,t^*)$};
        \draw[->] ($(rt3)+(5.7pt,-0.52)$) to [bend left=15] ($(rt3)+(3.5,-1.1)$);
        \node[below=3pt] at ($(rt3)+(4.5,-1.1)$) {\begin{tabular}{c} Revision results\\in a switch from $i$\\to $j$ with probability\\$X_i^N(t^*-\delta)\mathbb{P}(i\rightsquigarrow j \ | \ i)$ \end{tabular}};
    \end{tikzpicture}
    \caption{The revision timeline near $t^*$.}
    \label{fig:str_sw_timeline} 
\end{figure}
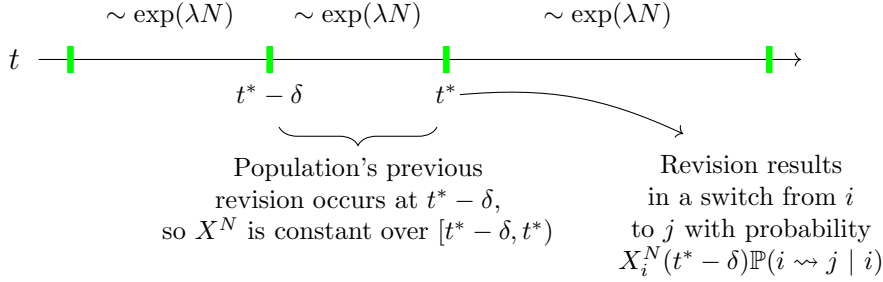

To determine the strategy switching probabilities $\mathbb{P}(i\rightsquigarrow j \ | \ i)$, the agents implement a learning rule, called the revision protocol. This protocol is a Lipschitz continuous function $\mathcal{T}:\Delta\times\mathbb{R}^n\to\mathbb{R}_{\geq 0}^{n\times n}$ that models the agents' strategic preferences in terms of $X^N$ and $P$. Particularly, $\mathcal{T}$ is bounded by the agents' revision rate $\lambda$ and, for any $i,j\in\{1,\dots,n\}$, it specifies the strategy switching probability of an $i$-strategist that is given a revision opportunity at $t^*$ as
\begin{align*}
\mathbb{P}(i\rightsquigarrow j \ | \ i) = \frac{\mathcal{T}_{i,j}(X^N(t^*-\delta),P(t^*-\delta))}{\lambda}.
\end{align*}
Hence, intuitively, $\mathcal{T}_{i,j}$ quantifies the rate with which revising $i$-strategists switch to $j$. Properties of $\mathcal{T}$ are crucial to determine the long run ($t\to\infty$) properties of $X^N$ (e.g. convergence to Nash equilibria). So specific classes of protocols that yield desirable convergence guarantees, such as pairwise comparison and excess payoff \citep{Sandholm2010Population-Game}, has attracted considerable attention. 

\subsection{The Population State as a Markov Process}
\label{subsec:Pop_State_as_Markov_Process}

The strategic environment in Section~\ref{subsec:Strat_Env} and the revision paradigm in Section~\ref{subsec:Rev_Par} act together. The resulting population state $X^N$ is a right-continuous pure-jump Markov process with state space $\mathbb{X}^N$, jump rate $\lambda N$ and transition probabilities from $\mathrm{x}$ to $\mathrm{z}$ given for any $\mathrm{x},\mathrm{z}\in \mathbb{X}^N$ by
\begin{align*}
&\mathbb{P}(X^N(t^*)=\mathrm{z}~|~X^N(t^*-\delta)=\mathrm{x})\\
&=
\begin{cases}
\mathrm{x}_i\mathbb{P}(i\rightsquigarrow j \ | \ i)~\text{if}~\mathrm{z}-\mathrm{x}=\frac{\mathbf{e}^j-\mathbf{e}^i}{N},~i,j\in\{1,\dots,n\}\\
0~\text{otherwise}
\end{cases},
\end{align*}
where $t^*$ denotes a revision time and $t^*-\delta$ is between $t^*$ and the time of the population's previous revision. We refer to \citet[Chapter 10]{Sandholm2010Population-Game} for further details.


\section{Main Results}
\label{sec:Main_Res}

In the population games literature, in most cases, mean-field approximation is the preferred approach for analyzing $X^N$ (see for instance the references at the beginning of Section~\ref{sec:Intro}). This method consists of approximating $X^N$ by the solution $x$ of
\begin{align*}
\dot{x}=\Phi(x)
\end{align*}
where
\begin{align*}
\Phi(\mathrm{x}) := \mathcal{V}(\mathrm{x},\mathcal{F}(\mathrm{x})),
\end{align*}
and, for all $i\in\{1,\dots,n\}$,
\begin{align*}
\mathcal{V}_i(\mathrm{x},\mathrm{p}):=\sum_{j=1}^n \mathrm{x}_j\mathcal{T}_{j,i}(\mathrm{x},\mathrm{p})-\sum_{j=1}^n\mathrm{x}_i\mathcal{T}_{i,j}(\mathrm{x},\mathrm{p}).
\end{align*}
Notice that the Lipschitz continuity of $\mathcal{F}$ and $\mathcal{T}$ extend to $\Phi$, with the Lipschitz constant that we will write as $L_\Phi$. Additionally, we denote $\phi^{max}:=\max_{\mathrm{x}\in\Delta}\Phi(x)$.

Recall from Section~\ref{subsec:Rev_Par} that $\mathcal{T}$ (called the revision protocol) essentially quantifies the strategy switching rates of revising agents. From the proximity of $x_j$ to $X^N_j$, it follows that the term $\sum_{j=1}^n x_i\mathcal{T}_{i,j}(x,p)$ approximates the rate of inflow to strategy $i$ and $\sum_{j=1}^n x_j\mathcal{T}_{j,i}(x,p)$ approximates the rate of outflow from strategy $i$. 

We can interpret the vector field $\Phi$ as an ``approximate derivative.'' Let $\tau^N$ denote the earliest revision time of the population and consider
\begin{align*}
&\Phi^N(\mathrm{x}):= \left(\frac{1}{\mathbb{E}[\tau^N]}\right)\mathbb{E}[X^N(\tau^N)-\mathrm{x} \ | \ X^N(0)=\mathrm{x}]. 
\end{align*}
Observe that $1/\mathbb{E}[\tau^N]=\lambda N$ and
\begin{align*}
&\mathbb{E}[X_i^N(\tau^N)-\mathrm{x}_i \ | \ X^N(0)=\mathrm{x}]\\
&=\frac{1}{N}\Big(\mathbb{P}(\text{First revising agent switches to }i~|~X^N(0)=\mathrm{x})\\
&\qquad\qquad \mathbb{P}(\text{First revising agent switches out from }i~|~X^N(0)=\mathrm{x})\Big)\\
&=\frac{1}{N}\left(\sum_{j=1,j\neq i}^n \mathrm{x}_j\mathbb{P}(j\rightsquigarrow i \ | \ j)-\sum_{j=1,j\neq i}^n \mathrm{x}_i\mathbb{P}(i\rightsquigarrow j \ | \ i)\right)\\
&=\frac{1}{\lambda N}\left(\sum_{j=1,j\neq i}^n \mathrm{x}_j\mathcal{T}_{j,i}(\mathrm{x},\mathcal{F}(\mathrm{x}))-\sum_{j=1,j\neq i}^n \mathrm{x}_i\mathcal{T}_{i,j}(\mathrm{x},\mathcal{F}(\mathrm{x}))\right).
\end{align*}
As a result,
\begin{align*}
&\Phi_i^N(\mathrm{x})= \sum_{j=1,j\neq i}^n \mathrm{x}_j\mathcal{T}_{j,i}(\mathrm{x},\mathcal{F}(\mathrm{x}))-\sum_{j=1,j\neq i}^n \mathrm{x}_i\mathcal{T}_{i,j}(\mathrm{x},\mathcal{F}(\mathrm{x}))= \Phi_i(\mathrm{x}).
\end{align*}

As we briefly expressed in Section~\ref{subsec:Need_for_a_Simpler_Proof}, there are two widespread mean-field approximation results for population games. The first one is by \citet{Sandholm2003Evolution-and-e,Sandholm2010Population-Game}, obtained through a direct application of \citet[Theorem~2.1]{Kurtz1970Solutions-of-Or}.
\begin{mdframed}
\noindent
\textbf{\textit{The Approximation Result of Kurtz \& Sandholm:}} \textit{Let $x$ be the solution of $
\dot{x}=\mathcal{V}(x,\mathcal{F}(x))$ with the initial state $x(0)=\mathrm{x}_0$ for any $\mathrm{x}_0\in\Delta$. If $\lim_{N\to\infty}X^N(0)=\mathrm{x}_0$, then for all $\epsilon, T>0$ it holds that
\begin{align*}
\lim_{N\to\infty} \mathbb{P}\left(\sup_{t\in [0,T]}\|X^N(t)-x(t)\|_2\geq \epsilon\right)=0.
\end{align*}}
\end{mdframed}
The second one, by \citet[Section~6]{Benaim2003Deterministic-a}, establishes a stronger conclusion by revealing a bound on the approximation error for finite $N$.
\begin{mdframed}
\noindent
\textbf{\textit{The Approximation Result of Bena\"{i}m \& Weibull:}} \textit{Let $x$ be the solution of $\dot{x}=\mathcal{V}(x,\mathcal{F}(x))$ with the initial state $x(0)=\mathrm{x}_0$. There exists a scalar $c>0$ such that, for any $\epsilon,T>0$ and large enough $N$:
\begin{align*}
\mathbb{P}\left(\sup_{t\in [0,T]}\|X^N(t)-x(t)\|_{\infty}\geq \epsilon \ \Bigg| \ X^N(0)=\mathrm{x}_0 \right)\leq 2e^{-\epsilon^2cN}
\end{align*}
for all $\mathrm{x}_0\in\Delta$.}
\end{mdframed}

These results are frequently cited and are suitable for many applications, however they have some setbacks.
\begin{itemize}
\item Kurtz's result is much more general than its application to population games. Accordingly, its proof draws results from high-level mathematical subjects, such as martingale theory, infinitesimal generators, and Banach spaces, making it demanding to fully understand. Moreover, it does not offer a bound on the approximation error for finite $N$.
\item Bena\"{i}m and Weibull's result is tailored to population games, and its proof is relatively simpler. However, it still uses results from martingale theory and infinitesimal generators, hindering its accessibility. Additionally, they delegate a significant part of its derivation to the reader, as a reproduction of another proof. The reader obtains explicit expressions for ``large enough $N$'' and $c$ only upon completing this derivation.
\end{itemize}

In light of these comments, we offer the theorem below, which we prove using only undergraduate-level probability.
\begin{mdframed}
\begin{theorem}
\label{thm:main}
Given any $\mathrm{x}_0\in\Delta$, let $x$ be the solution of $\dot{x}=\mathcal{V}(x,\mathcal{F}(x))$ with the initial state $x(0)=\mathrm{x}_0$. For each $\epsilon,T>0$ and $N>\mathfrak{b}$ it holds that
\begin{align}
&\mathbb{P}\left(\sup_{t\in [0,T]}\|X^N(t)-x(t)\|_2\geq \epsilon \ \Bigg| \ X^N(0)=\mathrm{x}_0 \right) \nonumber\\
&\qquad \leq \frac{9\lambda \ell^*}{4N\epsilon^2 L_{\Phi}}(e^{2L_{\Phi}T}-1)+\ell^*\mathcal{H}^N\left(\frac{T}{\ell^*}\right),
\end{align}
where
\begin{align*}
&\mathfrak{b} := \left(\frac{2Te^{L_{\Phi}T}}{\epsilon}\left( L_{\Phi}+\frac{2\phi^{max}\lambda}{L_{\Phi}}(e^{2L_{\Phi}T}-1) \right)\right)^3,\\
&\ell^* := 2\left\lceil\frac{Te\lambda}{\epsilon}\right\rceil,\\
&\mathcal{H}^N(\tau) := \frac{e^{-\lambda N\tau +1}}{2^{\lceil N\epsilon \rceil-1}}.
\end{align*}
\end{theorem}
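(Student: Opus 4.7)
The plan is to discretize $[0,T]$ into $\ell^*$ equal subintervals of length $\tau = T/\ell^*$ and reduce the continuous-time sup-norm event to two contributions: (i) a bulk event observed only at the partition points $t_k = k\tau$, which is controlled by Chebyshev's inequality applied to a recursive second-moment estimate, and (ii) a rare event of unusually many revisions inside some subinterval, which is controlled by an elementary Poisson tail bound. The choice $\ell^* = 2\lceil Te\lambda/\epsilon\rceil$ is calibrated precisely so that $\lambda N\tau \leq N\epsilon/(2e)$, which makes the Poisson tail beyond $\lceil N\epsilon\rceil$ amenable to a direct geometric bound that matches $\mathcal{H}^N(\tau)$.

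For the bulk contribution, the main step is a second-moment recursion for $e_k := X^N(t_k) - x(t_k)$. Using the identity $\Phi^N = \Phi$ derived immediately before the theorem statement, together with $(\mathbb{I}_0)$, $(\mathbb{I}_1)$, $(\mathbb{I}_{\geq 2})$ and the estimate on $\mathcal{M}^N(\tau)$, I can write $\mathbb{E}[X^N(t_{k+1}) - X^N(t_k) \mid X^N(t_k)]$ as $\Phi(X^N(t_k))\tau$ up to higher-order corrections, and compare it to $x(t_{k+1}) - x(t_k) = \int_{t_k}^{t_{k+1}}\Phi(x(s))\,ds$. Expanding $\|e_{k+1}\|_2^2$, bounding the cross term $\langle e_k,\cdot\rangle$ via Lipschitz continuity of $\Phi$, and bounding the conditional variance of the increment by the Poisson jump count (each jump has $\ell_2$-size $\sqrt{2}/N$), I obtain
\[\mathbb{E}[\|e_{k+1}\|_2^2 \mid X^N(t_k)] \leq (1+L_\Phi\tau)^2\|e_k\|_2^2 + \frac{c\,\tau}{N} + \text{small}(\tau,N),\]
where the hypothesis $N > \mathfrak{b}$ is precisely what is needed for the small corrections (from $\mathcal{M}^N(\tau)$, the intra-subinterval drift of $\Phi$, and the Riemann gap between $\Phi(x_k)\tau$ and $\int_{t_k}^{t_{k+1}}\Phi(x(s))\,ds$) to be absorbed while preserving the stated constants.

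A discrete Gr\"onwall unroll gives $\mathbb{E}\|e_k\|_2^2 \leq \frac{\lambda}{4NL_\Phi}(e^{2L_\Phi T}-1)$ uniformly in $k \leq \ell^*$. Chebyshev's inequality at threshold $\tfrac{2}{3}\epsilon$ then yields $\mathbb{P}(\|e_k\|_2 \geq \tfrac{2}{3}\epsilon) \leq \tfrac{9}{4\epsilon^2}\mathbb{E}\|e_k\|_2^2$, and a union bound across the $\ell^*$ partition points produces the first term of the theorem. For the rare event, the tail probability $\mathbb{P}(\text{Poisson}(\lambda N\tau) \geq \lceil N\epsilon\rceil)$ is handled by an elementary ratio test on $e^{-\mu}\mu^j/j!$: since $\mu = \lambda N\tau \leq N\epsilon/(2e)$, consecutive terms past the threshold have ratio $\leq 1/2$, so the tail is dominated by a geometric series that equals $\mathcal{H}^N(\tau)$ up to a single factor of $e$. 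A union bound over the $\ell^*$ subintervals gives the second term.

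The main obstacle, and where I would invest the most careful bookkeeping, is bridging the partition-point estimate to the continuous-time supremum. Within each subinterval and on the complement of the bad-count event, the fluctuation of $X^N(\cdot)$ around its partition-point value is bounded by $\sqrt{2}(\lceil N\epsilon\rceil-1)/N$ in $\ell_2$ and the fluctuation of $x(\cdot)$ is bounded by $\phi^{\max}\tau$. Combining these intra-subinterval bounds with the Chebyshev control at the partition points through triangle inequalities with constants tight enough to close on exactly $\epsilon$ (rather than a constant multiple of it) is the delicate part; it is exactly for this reconciliation that the specific form of $\mathfrak{b}$ and the factor $2e$ hidden inside $\ell^*$ are chosen, ensuring $\tau$ is small enough and $N$ large enough for all sub-estimates to consolidate into the clean bound in the theorem.
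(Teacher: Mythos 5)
Your overall architecture --- partition $[0,T]$ into $\ell^*$ subintervals, control the grid points by a second-moment bound plus Chebyshev, and control the interiors by a Poisson jump-count tail --- matches the paper's Lemma~\ref{lem:X_Xbar_Bd}, and your Chernoff/geometric-ratio treatment of $\mathbb{P}(\mathbb{I}\geq\lceil N\epsilon\rceil)$ is essentially the paper's. The difference, and the gap, is in how you obtain the moment estimate.

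The paper never runs a moment recursion at the step $\tau=T/\ell^*$. It splits $X^N-x$ through the mean trajectory $\bar{X}^N=\mathbb{E}[X^N\,|\,X^N(0)=\mathrm{x}_0]$ and proves two \emph{continuous-time} estimates: a variance bound $\sigma^N(t)\leq\frac{2\lambda}{N}te^{2L_\Phi t}$ obtained from the law of total variance by sending $\delta\to0^+$ and applying Gr\"onwall (Lemma~\ref{lem:Var_bd}), and a deterministic bias bound $\sup_t\|\bar{X}^N(t)-x(t)\|_2=O(N^{-1/3})$ obtained from the integral equation for $\bar{X}^N$ with a threshold parameter optimized at $\upsilon=N^{-1/3}$ (Lemma~\ref{lem:Xbar_x_Bd}); the cube in $\mathfrak{b}$ comes from that optimization, not from absorbing discretization errors. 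Your plan collapses both into a single discrete recursion for $e_k=X^N(t_k)-x(t_k)$ with step $\tau$, and this is where it fails: $\tau\approx\epsilon/(2e\lambda)$ is a constant \emph{independent of $N$}. The corrections you propose to absorb --- the Riemann gap between $\Phi(x(t_k))\tau$ and $\int_{t_k}^{t_{k+1}}\Phi(x(s))\,ds$, and the intra-subinterval drift of $X^N$ away from $X^N(t_k)$ --- are each of size $O(L_\Phi(\lambda+\phi^{max})\tau^2)$ per step with constants that do not depend on $N$. Summed over $\ell^*$ steps and passed through the Gr\"onwall unroll, they contribute an $N$-independent term of order $\tau T$ to the mean error (order $\tau^2$ after the Young-inequality absorption into the second moment, i.e.\ proportional to $\epsilon^2$), so your claimed uniform bound $\mathbb{E}\|e_k\|_2^2\leq\frac{\lambda}{4NL_\Phi}(e^{2L_\Phi T}-1)$ cannot hold: Chebyshev at threshold $\tfrac{2}{3}\epsilon$ would then return an $N$-independent constant rather than a term decaying like $1/N$. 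Taking $N>\mathfrak{b}$ does nothing here, because neither $\tau$ nor the truncation errors depend on $N$. To close the argument you would need a second, infinitesimal time scale for the moment estimate --- exactly the $\delta\to0^+$ differential/integral inequalities the paper uses --- or, equivalently, you would have to prove the paper's Lemma~\ref{lem:Var_bd} and Lemma~\ref{lem:Xbar_x_Bd} as standalone inputs, at which point your recursion is no longer doing the work.
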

\end{mdframed}

Notice that $\lim_{N\to\infty}\mathcal{H}^N(T/\ell^*)=0$. Thus, an immediate corollary of Theorem~\ref{thm:main} is that
\begin{align*}
&\lim_{N\to\infty}\mathbb{P}\left(\sup_{t\in [0,T]}\|X^N(t)-x(t)\|_2\geq \epsilon \ \Bigg| \ X^N(0)=\mathrm{x}_0 \right)=0.
\end{align*}

\subsection{Proof of Theorem~\ref{thm:main}}

Fundamentally, we prove Theorem~\ref{thm:main} by separately analyzing the deviations of $X^N$ and $x$ from the expected trajectory $\bar{X}^N := \mathbb{E}[X^N\ | \ X^N(0)=\mathrm{x}_0]$: From $\|X^N(t)-x(t)\|_2\leq \|X^N(t)-\bar{X}^N(t)\|_2 + \|\bar{X}^N(t)-x(t)\|_2$, it follows that
\begin{align*}
\sup_{t\in[0,T]}\|X^N(t)-&\bar{X}^N(t)\|_2 < \frac{\epsilon}{2} \quad \text{and} \quad  \sup_{t\in[0,T]}\|\bar{X}^N(t)-x(t)\|_2 < \frac{\epsilon}{2}\\
&\Rightarrow \sup_{t\in[0,T]}\|X^N(t)-x(t)\|_2<\epsilon.
\end{align*}
Hence, to prove Theorem~\ref{thm:main}, it suffices to verify that
\begin{align*}
&\mathbb{P}\Bigg( \sup_{t\in[0,T]}\|X^N(t)-\bar{X}^N(t)\|_2 \geq \frac{\epsilon}{2} \text{ or } \sup_{t\in[0,T]}\|\bar{X}^N(t)-x(t)\|_2 \geq \frac{\epsilon}{2} \ \Bigg| \ X^N(0)=\mathrm{x}_0 \Bigg) \\
&\leq \frac{9\lambda \ell^*}{4N\epsilon^2 L_{\Phi}}(e^{2L_{\Phi}T}-1)+\ell^*\mathcal{H}^N\left(\frac{T}{\ell^*}\right)
\end{align*}
whenever $N>\mathfrak{b}$.

Furthermore, observe that $\bar{X}^N$ and $x$ are deterministic. Therefore, our aim shifts into showing that if $N>\mathfrak{b}$, then
\begin{align*}
& \sup_{t\in[0,T]}\|\bar{X}^N(t)-x(t)\|_2 < \frac{\epsilon}{2}
\end{align*}
and
\begin{align}
&\mathbb{P}\Bigg( \sup_{t\in[0,T]}\|X^N(t)-\bar{X}^N(t)\|_2 \geq \frac{\epsilon}{2}\ \Bigg| \ X^N(0)=\mathrm{x}_0 \Bigg) \nonumber\\
&\leq \frac{9\lambda \ell^*}{4N\epsilon^2 L_{\Phi}}(e^{2L_{\Phi}T}-1)+\ell^*\mathcal{H}^N\left(\frac{T}{\ell^*}\right). \label{eq:X_Xbar_bd}
\end{align}

In fact, \eqref{eq:X_Xbar_bd} holds for all $N$, regardless of whether it exceeds $\mathfrak{b}$. We will establish this bound in Lemma~\ref{lem:X_Xbar_Bd}. Subsequently, in Lemma~\ref{lem:Xbar_x_Bd}, we will prove that $N>\mathfrak{b}$ implies $\sup_{t\in[0,T]}\|\bar{X}^N(t)-x(t)\|_2 < \frac{\epsilon}{2}$.

We would like to highlight that a crucial step in obtaining \eqref{eq:X_Xbar_bd} is bounding the variance of $X^N$, which we state in Lemma~\ref{lem:Var_bd}. Specifically, we derive \eqref{eq:X_Xbar_bd} from the following steps:
\begin{itemize}
\item We bound the variance of $X^N$ and quantize $[0,T]$ as $$\mathbb{T}:=\left\{0,\frac{T}{\ell^*},\dots,\frac{(\ell^*-1)T}{\ell^*}\right\}.$$ Then, we use the variance bound on $X^N(t)$ for every $t\in\mathbb{T}$. This causes the $(e^{2L_{\Phi}T}-1)9\lambda\ell^*/(4N\epsilon^2 L_{\Phi})$ term.
\item We show that, with high probability, $X^N(t)$ can't deviate significantly from $\bar{X}^N(t)$ over $t\in [kT/\ell^*,(k+1)T/\ell^*]$ for every $k\in\{0,\dots,l^*-1\}$. This results in the $\ell^*\mathcal{H}^N(T/\ell^*)$ term.
\end{itemize}

Overall, below are some key results that we leverage in the remainder of the proof.
\begin{itemize}
\item Law of total variance.
\item Law of total probability.
\item Cauchy-Schwarz inequality for random variables (also known as the covariance inequality).
\item Gr\"{o}nwall-Bellman Lemma (also known as Gr\"{o}nwall's inequality).
\item Markov's inequality.
\end{itemize}

\subsubsection{Deviation of $X^N$ from $\bar{X}^N$}

In this section, we prove \eqref{eq:X_Xbar_bd}. We begin our analysis by presenting a bound for the variance of $X^N$.

\begin{lemma}
\label{lem:Var_bd}
For every $N\in\mathbb{N}$, $\mathrm{x}_0\in\mathbb{X}^N$ and $t\geq 0$, it holds that
\begin{align*}
\sigma^N(t):= Var[X^N(t)\ | \ X^N(0)=\mathrm{x}_0]\leq \left(\frac{2\lambda}{N}\right)te^{2L_{\Phi}t}.
\end{align*}
\end{lemma}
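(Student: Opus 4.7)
The plan is to derive a linear differential inequality for $\sigma^N(t)$ by analyzing an infinitesimal time step $[t,t+\delta]$ via the law of total variance, and then integrate using Gr\"onwall--Bellman. I would interpret $\sigma^N(t) = \mathbb{E}[\|X^N(t)-\bar{X}^N(t)\|_2^2 \mid X^N(0)=\mathrm{x}_0]$ (the trace of the covariance matrix of $X^N(t)$), so that $\sigma^N(0)=0$ since $X^N(0)$ is the deterministic constant $\mathrm{x}_0$. Conditioning on $X^N(t)$ and applying the law of total variance coordinatewise gives
\begin{align*}
\sigma^N(t+\delta) = \mathbb{E}\Big[\textstyle\sum_{i=1}^n \mathrm{Var}[X^N_i(t+\delta)\mid X^N(t)]\Big] + \textstyle\sum_{i=1}^n \mathrm{Var}[\mathbb{E}[X^N_i(t+\delta)\mid X^N(t)]].
\end{align*}

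For the first (``noise'') term I would use that each revision shifts $X^N$ by a vector of Euclidean norm $\sqrt{2}/N$, so that $\|X^N(t+\delta)-X^N(t)\|_2^2 \le (2/N^2)\,\mathbb{I}([t,t+\delta])^2$. Since a conditional variance is bounded by the conditional second moment about any fixed reference (in particular $X^N(t)$), this sum is at most
\begin{align*}
\frac{2}{N^2}\mathbb{E}\big[\mathbb{I}([t,t+\delta])^2\big] = \frac{2\lambda\delta}{N}+2\lambda^2\delta^2.
\end{align*}
For the second (``drift'') term I would combine $(\mathbb{I}_0),(\mathbb{I}_1),(\mathbb{I}_{\ge 2})$ with the single-jump computation from the paper that yielded $\Phi^N=\Phi$ to write $\mathbb{E}[X^N(t+\delta)\mid X^N(t)=\mathrm{x}] = \mathrm{x} + \delta\,\Phi(\mathrm{x}) + r(\mathrm{x},\delta)$, with $\sup_{\mathrm{x}\in\Delta}\|r(\mathrm{x},\delta)\|_2 = O(\mathcal{M}^N(\delta))$ (using boundedness of $\Phi$ on the compact simplex $\Delta$). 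The Cauchy--Schwarz inequality for random variables (which upgrades to the Minkowski bound on $\sqrt{\mathrm{Var}[\,\cdot\,]}$), combined with the Lipschitz estimate $\mathrm{Var}[\Phi(X^N(t))]\le L_\Phi^2\sigma^N(t)$, then bounds the drift term by $(1+\delta L_\Phi)^2\sigma^N(t) + o(\delta)$.

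Summing the two contributions and using \eqref{eq:MN_is_o(delta)} to absorb the $O(\delta^2)$ and $O(\mathcal{M}^N(\delta))$ remainders into $o(\delta)$ produces $\sigma^N(t+\delta) - \sigma^N(t) \le 2L_\Phi\delta\,\sigma^N(t) + (2\lambda/N)\delta + o(\delta)$. Dividing by $\delta$ and sending $\delta\to 0^+$ yields the upper Dini inequality $\dot\sigma^N(t)\le 2L_\Phi \sigma^N(t) + 2\lambda/N$. Gr\"onwall--Bellman with $\sigma^N(0)=0$ then gives $\sigma^N(t)\le \frac{\lambda}{NL_\Phi}(e^{2L_\Phi t}-1)$, and the elementary inequality $e^a-1\le a\,e^a$ (for $a\ge 0$), applied with $a=2L_\Phi t$, converts this to the claimed $(2\lambda/N)\,t\,e^{2L_\Phi t}$.

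The main obstacle is the bookkeeping in the drift step: I need to ensure the remainder $r(\mathrm{x},\delta)$ is genuinely uniform in $\mathrm{x}\in\Delta$, and that its interaction with $\sqrt{\sigma^N(t)}$ in the squaring step truly contributes only $o(\delta)$ (relying on $\sigma^N(t)$ being uniformly bounded by the diameter of $\Delta$), so that the crucial coefficient $2L_\Phi$ in front of $\sigma^N(t)$ at the Gr\"onwall step is not contaminated by lower-order terms. Once this is verified, the remaining computations are mechanical.
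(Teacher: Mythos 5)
Your proposal is correct and follows essentially the same route as the paper: law of total variance over an infinitesimal step, the Lipschitz estimate $\mathrm{Var}[\Phi(X^N(t))]\le L_\Phi^2\sigma^N(t)$ for the drift term, and Gr\"onwall--Bellman to close the differential inequality $\dot\sigma^N\le 2L_\Phi\sigma^N+2\lambda/N$. The only (harmless) differences are cosmetic: you bound the noise term via the second moment $\mathbb{E}[\mathbb{I}^2]$ of the Poisson count where the paper conditions on $\{\mathbb{I}=1\}$ versus $\{\mathbb{I}\ge 2\}$, and you reach the final form via $e^a-1\le ae^a$ rather than the integral form of Gr\"onwall.
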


\begin{proof}
For the entire proof, we assume that $X^N(0)=\mathrm{x}_0$, but do not explicitly denote this condition for notational convenience. Let us start by using the law of total variance to write
\begin{equation}
\sigma^N(t+\delta) = \mathbb{E} \Big [Var \big [X^N(t+\delta)~|~X^N(t) \big ] \Big ] + Var \Big [\mathbb{E} \big [X^N(t+\delta)~|~X^N(t) \big] \Big ]. \label{eq:tot_var}
\end{equation}

Recall from the properties of the Poisson process (see Section~\ref{subsubsec:Rev_Times}) that
\begin{align*}
\lim_{\delta\to 0^+}\frac{\mathbb{P}(\mathbb{I}([t,t+\delta])=1)}{\delta}= N\lambda
\end{align*}
and
\begin{align*}
\lim_{\delta\to 0^+}\frac{\mathbb{P}(\mathbb{I}([t,t+\delta])\geq 2)}{\delta^2}= \lim_{\delta\to 0^+}\frac{\mathcal{M}^N(\delta)}{\delta^2} = \frac{(N\lambda)^2}{2}.
\end{align*}

With these in mind, we focus on the $Var\big [X^N(t+\delta)~|~X^N(t) \big]$ term. We can write
\begin{align*}
&Var \big [X^N(t+\delta)~|~X^N(t) \big ]\\
&= Var \big [X^N(t+\delta)-X^N(t)~|~X^N(t),~\mathbb{I}([t,t+\delta]=1)\big ] (N \lambda \delta) e^{- N \lambda \delta} \\ 
&\quad  + Var \big [X^N(t+\delta)-X^N(t)~|~X^N(t),~\mathbb{I}([t,t+\delta]\geq 2)\big ]\mathcal{M}^N(\delta)
\end{align*}
and use
\begin{align*}
&Var \big [X^N(t+\delta)-X^N(t)~|~X^N(t),~\mathbb{I}([t,t+\delta]=1)\big ] \leq \frac{\|\mathbf{e}^i-\mathbf{e}^j\|^2_2}{N^2}=\frac{2}{N^2},\\
&Var \big [X^N(t+\delta)-X^N(t)~|~X^N(t),~\mathbb{I}([t,t+\delta]\geq 2)\big ] \leq \max_{\mathrm{x},\mathrm{y}\in\Delta}\|\mathrm{x}-\mathrm{y}\|_2^2=4
\end{align*}
to get
\begin{align}
&Var \big [X^N(t+\delta)-X^N(t)~|~X^N(t)\big ] \leq \frac{2\lambda\delta}{N}e^{-N\lambda\delta} + 4\mathcal{M}^N(\delta). \label{eq:var_diff_bd}
\end{align}

Now, we examine $\mathbb{E} \big [X^N(t+\delta)~|~X^N(t) \big ]$. Looking at the effect of the $\delta$ time increment, we get
\begin{align*}
&\mathbb{E} \big [X^N(t+\delta)~|~X^N(t)=\mathrm{x} \big ]\\
&= \mathrm{x} + \underbrace{\left ( \frac{1}{N} \sum_{i,j=1}^n (\mathbf{e}^j-\mathbf{e}^i) \mathrm{x}_i \mathbb{P}(i \rightsquigarrow j \ | \ i) \right) }_{\text{effect of one revision}} (N \lambda \delta) e^{- N \lambda \delta} + \underbrace{\Gamma(\mathrm{x},\delta)}_{\text{effect of $\geq 2$ rev}} \mathcal{M}^N(\delta)\\
&=\mathrm{x} + \delta \Phi(\mathrm{x})e^{-N\lambda\delta} + \Gamma(\mathrm{x},\delta) \mathcal{M}^N(\delta),
\end{align*}
where
\begin{align*}
\|\Gamma(\mathrm{x},\delta)\|_2\leq \max_{\mathrm{x},\mathrm{y}\in\Delta}\|\mathrm{x}-\mathrm{y}\|_2=2.
\end{align*}

Consequently, employing the properties of variance, the covariance inequality, and our analyses of $\mathbb{E} \big [X^N(t+\delta)~|~X^N(t) \big ]$ and $Var \big [X^N(t+\delta)~|~X^N(t) \big ]$, we obtain
\begin{align}
&Var \Big [ \mathbb{E} \big [X^N(t+\delta)~|~X^N(t) \big ] \Big] \nonumber\\
& \leq Var[Z(t+\delta)] + 2 \sqrt{Var[Z(t+\delta)]}\sqrt{Var[W(t+\delta)]} + Var[W(t+\delta)] \label{eq:Var_of_exp}
\end{align} 
where 
\begin{align*} 
Z(t+\delta)&:=X^N(t)+\delta \Phi(X^N(t))e^{-N\lambda\delta}, \\ 
W(t+\delta)&:= \Gamma(X^N(t),\delta)\mathcal{M}^N(\delta).
\end{align*}
Similarly, leveraging the properties of variance and the covariance inequality, we can write
\begin{align}
&Var[Z(t+\delta)] \nonumber\\
&\leq \sigma^N(t) + 2\delta e^{-N\lambda\delta} \sqrt{\sigma^N(t)} \sqrt{Var[\Phi(X^N(t))]}+\delta^2e^{-2N\lambda\delta}Var[\Phi(X^N(t))] \nonumber\\ 
& \leq \sigma^N(t) + 2\delta e^{-N\lambda\delta} L_\Phi \sigma^N(t) + \delta^2e^{-2N\lambda\delta}L_\Phi^2 \sigma^N(t). \label{eq:Var_Z_bd}
\end{align}
In the last inequality, note that we used $Var[\Phi(X^N(t))]\leq L_{\Phi}^2\sigma^N(t)$, which follows from
\begin{align*}
2Var[\Phi(X^N(t))] &= Var[\Phi(X^N(t))-\Phi(Y^N(t))]\\
&= \mathbb{E}[\|\Phi(X^N(t))-\Phi(Y^N(t))\|_2^2]-\|\mathbb{E}[\Phi(X^N(t))-\Phi(Y^N(t))]\|_2^2\\
&= \mathbb{E}[\|\Phi(X^N(t))-\Phi(Y^N(t))\|_2^2]\\
&\leq L_{\Phi}^2\mathbb{E}[\|X^N(t)-Y^N(t)\|_2^2]\\
&= L_{\Phi}^2\mathbb{E}[\|X^N(t)-Y^N(t)\|_2^2]-\|\mathbb{E}[X^N(t)-Y^N(t)]\|_2^2\\
&= 2L_{\Phi}^2\sigma^N(t),
\end{align*}
where $Y^N(t)$ is a random variable that is independent to and identically distributed as $X^N(t)$. 

As a result, substituting \eqref{eq:var_diff_bd}, \eqref{eq:Var_of_exp}, \eqref{eq:Var_Z_bd} and $\|\Gamma(X^N(t),\delta)\|_2\leq 2$ into \eqref{eq:tot_var} yields
\begin{align*}
\sigma^N(t+\delta) \leq \sigma^N(t) + \frac{2 \lambda \delta}{N} e^{-N\lambda\delta} +2 \delta e^{-N\lambda\delta}L_\Phi\sigma^N(t)+O(\delta^2),
\end{align*}
in which $O(\delta^2)$ represents the sum of the terms $g(\delta)$ that satisfy 
\begin{align*}
\lim_{\delta\to 0^+}\frac{g(\delta)}{\delta^2}=a
\end{align*}
for some constant $a$. Therefore,
\begin{align*}
&\lim_{\delta\to 0^+}\frac{\sigma^N(t+\delta)-\sigma^N(t)}{\delta} \leq \lim_{\delta\to 0^+} \frac{2 \lambda}{N} e^{-N\lambda\delta} +2 e^{-N\lambda\delta}L_\Phi\sigma^N(t)+\frac{O(\delta^2)}{\delta}\\
&\Rightarrow \dot{\sigma}^N(t) \leq \frac{2\lambda}{N}+2L_{\Phi}\sigma^N(t).
\end{align*}
Finally, noticing that $\sigma^N$ is continuous, we can use the Bellman-Gr\"{o}nwall Lemma to conclude
\begin{align*}
&\sigma^N(t)\leq \frac{2\lambda}{N} te^{2L_{\Phi}t}.
\end{align*}

$\blacksquare$
\end{proof}

Having derived a bound on the variance of $X^N$, we proceed to use it for bounding $\mathbb{P}(\sup_{t\in[0,T]}\|X^N(t)-\bar{X}^N(t)\|_2>\epsilon/2~|~ X^N(0)=\mathrm{x}_0)$. As we pointed out in Section~\ref{sec:Main_Res}, our strategy is to first quantize $[0,T]$ as $$\mathbb{T}:=\left\{0,\frac{T}{\ell^*},\dots,\frac{(\ell^*-1)T}{\ell^*}\right\}$$ and bound $\mathbb{P}(\max_{t\in\mathbb{T}}\|X^N(t)-\bar{X}^N(t)\|_2>\epsilon/2~|~ X^N(0)=\mathrm{x}_0)$. Subsequently, we will prove that $X^N$ can't deviate from $\bar{X}^N$ with high probability over any two consecutive points of $\mathbb{T}$.

\begin{lemma}
\label{lem:X_Xbar_Bd}
For every $\epsilon,T>0$, $N\in\mathbb{N}$ and $\mathrm{x}_0\in\Delta$, the following holds:
\begin{align*}
&\mathbb{P}\Bigg( \sup_{t\in[0,T]}\|X^N(t)-\bar{X}^N(t)\|_2 \geq \frac{\epsilon}{2}\ \Bigg| \ X^N(0)=\mathrm{x}_0 \Bigg) \nonumber\\
&\leq \frac{9\lambda \ell^*}{4N\epsilon^2 L_{\Phi}}(e^{2L_{\Phi}T}-1)+\ell^*\mathcal{H}^N\left(\frac{T}{\ell^*}\right).
\end{align*}
\end{lemma}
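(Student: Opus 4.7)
The strategy is to reduce the continuous supremum to a maximum over the finite grid $\mathbb{T}$ and, separately, to control the within-subinterval excursions of $X^N$. By the triangle inequality, for any $t\in[t_k,t_{k+1}]$ with $t_k:=kT/\ell^*$,
\[
\|X^N(t)-\bar{X}^N(t)\|_2 \le \|X^N(t_k)-\bar{X}^N(t_k)\|_2 + \|X^N(t)-X^N(t_k)\|_2 + \|\bar{X}^N(t)-\bar{X}^N(t_k)\|_2,
\]
and the last summand is deterministic and of order $\tau:=T/\ell^*$. Splitting the target budget $\epsilon/2$ into a grid part, a within-interval excursion part, and the deterministic residual, it suffices to bound each of the two probabilistic events by the matching summand of the claim.

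For the grid event I would apply the vector-valued Chebyshev inequality at each $t_k$ and union-bound over the $\ell^*$ grid points, plugging in the variance estimate from Lemma~\ref{lem:Var_bd}. Choosing the Chebyshev radius as a small multiple of $\epsilon$ and using the Gr\"{o}nwall-sharpened form $\sigma^N(t)\le \frac{\lambda}{NL_{\Phi}}(e^{2L_{\Phi}t}-1)$ of that bound, the sum collapses to the $\frac{9\lambda\ell^*}{4N\epsilon^2 L_{\Phi}}(e^{2L_{\Phi}T}-1)$ summand. For the excursion event, the number of revisions $M_k$ in $[t_k,t_{k+1}]$ is Poisson with mean $\lambda N\tau$, and the choice $\ell^*=2\lceil Te\lambda/\epsilon\rceil$ forces $\lambda N\tau\le N\epsilon/(2e)$, which lies safely below the threshold $\lceil N\epsilon\rceil$. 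The standard Poisson Chernoff bound $\mathbb{P}(\mathrm{Poisson}(\mu)\ge m)\le e^{-\mu}(e\mu/m)^m$ then specializes, after using $e\mu/\lceil N\epsilon\rceil\le 1/2$ and summing a geometric tail, to $\mathcal{H}^N(\tau)$. Union-bounding over the $\ell^*$ subintervals produces the $\ell^*\mathcal{H}^N(T/\ell^*)$ summand.

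The main bookkeeping obstacle I anticipate is calibrating the three budgets (grid, excursion, deterministic residual) so that their sum stays below $\epsilon/2$ while the resulting prefactors reproduce $9/4$ and $1$ exactly. Each jump of $X^N$ has $L_2$-size $\sqrt{2}/N$, so a naive count of $\lceil N\epsilon\rceil$ revisions nominally permits an excursion as large as $\sqrt{2}\epsilon$; the argument therefore has to exploit either a coordinate-wise ($\ell_\infty$) analysis of the jumps or the near-martingale structure of the centered process $X^N-\bar{X}^N$ to keep the excursion term within budget. The prefactor $e$ in the definition of $\mathcal{H}^N$ arises naturally from the slackness in the Chernoff bound once $\ell^*$ is calibrated as above.
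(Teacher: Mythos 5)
Your proposal follows essentially the same route as the paper: quantize $[0,T]$ into the grid $\mathbb{T}$, control the grid deviations via Chebyshev/Markov plus the variance bound of Lemma~\ref{lem:Var_bd} and a union bound, and control the within-interval excursions via the Poisson Chernoff bound calibrated through the choice of $\ell^*$. The one obstacle you flag---that $\lceil N\epsilon\rceil$ jumps of $\ell_2$-size $\sqrt{2}/N$ permit an excursion of order $\sqrt{2}\epsilon$---is resolved in the paper not by any martingale or coordinate-wise argument but simply by accepting the crude bound $\sup_{t\in[0,T]}\|X^N(t)-\hat{X}^N(t)\|_2<3\epsilon$ (with $\hat{X}^N$ the staircase version of $\bar{X}^N$) and rescaling $\epsilon$ at the very end, which is also where the prefactor $9/4$ is absorbed.
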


\begin{proof}
Observe that, if 
\begin{align*}
& \mathbb{I}\left(\left[\frac{kT}{\ell^*},\frac{(k+1)T}{\ell^*}\right]\right)<\left\lceil N\epsilon \right\rceil,
\end{align*}
then $\|X^N(t)-X^N(s)\|_2\leq 2\epsilon$ for every $t,s\in [kT/\ell^*,(k+1)T/\ell^*]$. In other words, if the population receives less than $\lceil N\epsilon\rceil$ revision opportunities within a time interval of length $T/\ell^*$, then $X^N$ cannot deviate more than $2\epsilon$ from $\bar{X}^N$ over this interval. If, in addition 
\begin{align*}
\left\|X^N\left(\frac{(k+1)T}{\ell^*}\right)-\bar{X}^N\left(\frac{(k+1)T}{\ell^*}\right)\right\|_2<\epsilon
\end{align*}
for all $k\in\{0,\dots,\ell^*-1\}$, then
\begin{align*}
\sup_{t\in[0,T]} \|X^N(t)-\hat{X}^N(t)\|_2 <3\epsilon,
\end{align*}
where $\hat{X}^N$ is the staircase approximation of $\bar{X}^N$ given by $\hat{X}^N(t):=\bar{X}^N(\lceil t\ell^*/T \rceil T/\ell^*)$. 

From this reasoning, we get the union-bound below:
\begin{align*}
&\mathbb{P}\left( \sup_{t\in[0,T]}\|X^N(t)-\hat{X}^N(t)\|_2\geq 3\epsilon ~\Bigg|~ X^N(0)=\mathrm{x}_0\right)\\
&\leq \mathbb{P}\Bigg(\max_{t\in \mathbb{T}}\|X^N(t)-\bar{X}^N(t)\|_2\geq \epsilon\\
&\qquad\qquad \bigcup_{k=0}^{\ell^*-1} ~ \mathbb{I}\left(\left[\frac{kT}{\ell^*},\frac{(k+1)T}{\ell^*}\right]\right)\geq \lceil N\epsilon \rceil \ \Bigg|\ X^N(0)=\mathrm{x}_0 \Bigg)\\
&\leq \mathbb{P}\Bigg( \bigcup_{k=0}^{\ell^*-1} \Bigg\{ \left\|X^N\left(\frac{kT}{\ell^*}\right)-\bar{X}^N\left(\frac{kT}{\ell^*}\right)\right\|_2\geq \epsilon\\
&\qquad\qquad \cup~ \mathbb{I}\left(\left[\frac{kT}{\ell^*},\frac{(k+1)T}{\ell^*}\right]\right)\geq \lceil N\epsilon \rceil \Bigg\} \ \Bigg|\ X^N(0)=\mathrm{x}_0 \Bigg)\\
&\leq \sum_{k=0}^{\ell^*-1} \mathbb{P}\left( \left\|X^N\left(\frac{kT}{\ell^*}\right)-\bar{X}^N\left(\frac{kT}{\ell^*}\right)\right\|_2 \geq \epsilon ~\Bigg|~ X^N(0)=\mathrm{x}_0 \right)\\
&\qquad\qquad +\sum_{k=0}^{\ell^*-1}\mathbb{P}\left( \mathbb{I}\left(\left[\frac{kT}{\ell^*},\frac{(k+1)T}{\ell^*}\right]\right)\geq \lceil N\epsilon \rceil ~\Bigg|~ X^N(0)=\mathrm{x}_0 \right).
\end{align*}

Let us focus on $\mathbb{P}(\|X^N(kT/\ell^*)-\bar{X}^N(kT/\ell^*)\|_2\geq \epsilon ~|~ X^N(0)=\mathrm{x}_0 )$. Notice that, for any $t\in\mathbb{T}$, we can use Markov's inequality to obtain
\begin{align*}
\mathbb{P}\left( \|X^N(t)-\bar{X}^N(t)\|_2^2 \geq \epsilon^2 ~|~ X^N(0)=\mathrm{x}_0 \right) \leq \frac{\sigma^N(t)}{\epsilon^2}.
\end{align*}
Thus,
\begin{align*}
\sum_{k=0}^{\ell^*-1} \mathbb{P}\left( \left\|X^N\left(\frac{kT}{\ell^*}\right)-\bar{X}^N\left(\frac{kT}{\ell^*}\right)\right\|_2 \geq \epsilon ~\Bigg|~ X^N(0)=\mathrm{x}_0 \right) \leq \frac{1}{\epsilon^2}\sum_{k=0}^{\ell^*-1} \sigma^N\left(\frac{kT}{\ell^*}\right).
\end{align*}
From Lemma~\ref{lem:Var_bd}, we know that
\begin{align*}
\sigma^N\left(\frac{kT}{\ell^*}\right)\leq \left(\frac{2\lambda}{N}\right)\left(\frac{kT}{\ell^*}\right)e^{2L_{\phi}kT/\ell^*}.
\end{align*}
Moreover, since $te^{L_{\Phi}t}$ is increasing in $t$, the sum
\begin{align*}
&\sum_{k=0}^{\ell^*-1}\left(\frac{kT}{\ell^*}\right)e^{kT/\ell^*}=\frac{\ell^*}{T}\sum_{k=0}^{\ell^*-1}\left(\frac{kT}{\ell^*}\right)e^{kT/\ell^*}\frac{T}{\ell^*}(k+1-k)
\end{align*}
is a lower bound for the integral $(\ell^*/T)\int_0^T te^{L_{\Phi}t}dt$. Therefore,
\begin{align*}
&\sum_{k=0}^{\ell^*-1} \mathbb{P}\left( \left\|X^N\left(\frac{kT}{\ell^*}\right)-\bar{X}^N\left(\frac{kT}{\ell^*}\right)\right\|_2 \geq \epsilon ~\Bigg|~ X^N(0)=\mathrm{x}_0 \right)\\
&\leq \frac{\lambda \ell^*}{N L_{\Phi} \epsilon^2}(e^{2L_{\Phi}T}-1).
\end{align*}

Now, we consider $\mathbb{P}(\mathbb{I}([kT/\ell^*,(k+1)T/\ell^*])>\lceil N\epsilon \rceil ~|~ X^N(0)=\mathrm{x}_0)$. Assume that $\tau$ satisfies $2\tau\leq\beta:=\epsilon/(e\lambda)$ and note that $\mathbb{I}([t,t+\tau])$ has Poisson distribution with mean $N\lambda\tau$. To get a tail bound, we will use the Chernoff bound on $\mathbb{I}([t,t+\tau])$ in the form
\begin{align*}
\mathbb{P}(\mathbb{I}([t,t+\tau])\geq (1+\eta)\mu)\leq \left(\frac{e^\eta}{(1+\eta)^{1+\eta}} \right)^\mu.
\end{align*}
Specifically, we set $\mu=N\lambda\tau$ and $\eta=\lceil N\epsilon \rceil /(N\lambda\tau)-1$ to get
\begin{align*}
\mathbb{P}(\mathbb{I}([t,t+\tau])\geq \lceil N\epsilon \rceil )\leq e^{\lceil N\epsilon \rceil} e^{-N\lambda\tau} \left(\frac{\lambda N\tau}{\lceil N\epsilon \rceil}\right)^{\lceil N\epsilon \rceil}.
\end{align*}
From $\tau\leq \beta/2$, it follows that
\begin{align*}
e^{\lceil N\epsilon \rceil} e^{-N\lambda\tau} \left(\frac{\lambda N\tau}{\lceil N\epsilon \rceil}\right)^{\lceil N\epsilon \rceil}\leq e^{\lceil N\epsilon \rceil} e^{-N\lambda\tau} \left(\frac{1}{2e}\right)^{ N\epsilon}\left(\frac{N\epsilon}{\lceil N\epsilon \rceil}\right)\leq e^{\lceil N\epsilon \rceil} e^{-N\lambda\tau} \left(\frac{1}{2e}\right)^{ N\epsilon}.
\end{align*}
Moreover, $N\epsilon +1\geq \lceil N\epsilon \rceil$ yields
\begin{align*}
&e^{\lceil N\epsilon \rceil} e^{-N\lambda\tau} \left(\frac{1}{2e}\right)^{ N\epsilon}\\
&\leq e^{\lceil N\epsilon \rceil} e^{-N\lambda\tau} \left(\frac{1}{2e}\right)^{ \lceil N\epsilon\rceil -1}\\
&= \frac{e^{-N\lambda\tau +1}}{2^{\lceil N\epsilon \rceil -1}}\\
&=:\mathcal{H}^N(\tau).
\end{align*}
Recall that $\ell^*:=2\lceil Te\lambda/\epsilon \rceil$, which implies $2T/\ell^*\leq\beta$. Thus, we can use the $\mathcal{H}^N(\tau)$ bound with $\tau = T/\ell^*$ to arrive at
\begin{align*}
&\sum_{k=0}^{\ell^*-1}\mathbb{P}\left( \mathbb{I}\left(\left[\frac{kT}{\ell^*},\frac{(k+1)T}{\ell^*}\right]\right)\geq \lceil N\epsilon \rceil ~\Bigg|~ X^N(0)=\mathrm{x}_0 \right)\leq \ell^* \mathcal{H}^N\left(\frac{T}{\ell^*} \right).
\end{align*}
Note that there are sharper tail bounds, however the one above is sufficient for our purposes because $\lim_{N\to \infty}\ell^*\mathcal{H}^N(T/\ell^*)=0$.

Up to this point, we have verified that
\begin{align*}
&\mathbb{P}\left( \sup_{t\in[0,T]}\|X^N(t)-\hat{X}^N(t)\|_2\geq 3\epsilon ~\Bigg|~ X^N(0)=\mathrm{x}_0\right)\\
&\leq \frac{\lambda \ell^*}{N L_{\Phi} \epsilon^2}(e^{2L_{\Phi}T}-1) +  \ell^* \mathcal{H}^N\left(\frac{T}{\ell*} \right).
\end{align*}
The conclusion of the lemma follows from the continuity of $\bar{X}^N$ and by replacing $3\epsilon$ with $\epsilon/2$.

$\blacksquare$
\end{proof}

\subsubsection{Deviation of $\bar{X}^N$ from $x$}

We conclude the proof of Theorem~\ref{thm:main} by establishing that $N>\mathfrak{b}$ implies $\sup_{t\in[0,T]}\|\bar{X}^N(t)-x(t)\|_2< \epsilon/2$.

\begin{lemma}
\label{lem:Xbar_x_Bd}
Given any $\epsilon,T>0$, if
\begin{align*}
N>\mathfrak{b}:=\left(\frac{2Te^{L_{\Phi}T}}{\epsilon}\left( L_{\Phi}+\frac{2\phi^{max}\lambda}{L_{\Phi}}(e^{2L_{\Phi}T}-1) \right)\right)^3,
\end{align*}
then
\begin{align*}
\sup_{t\in[0,T]}\|\bar{X}^N(t)-x(t)\|_2< \frac{\epsilon}{2}.
\end{align*}
\end{lemma}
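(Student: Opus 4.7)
The plan is to exploit the single-step expectation identity already derived inside the proof of Lemma~\ref{lem:Var_bd}, use it to establish that $\bar{X}^N$ itself satisfies an ODE close to $\dot{x}=\Phi(x)$, and then close the loop with the Gr\"{o}nwall--Bellman lemma. Since both $\bar{X}^N$ and $x$ are deterministic and coincide at $t=0$, no probabilistic estimate beyond Lemma~\ref{lem:Var_bd} is needed; the whole argument is pure analysis.

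First, I would show that $\dot{\bar{X}}^N(t)=\mathbb{E}[\Phi(X^N(t))]$. Taking expectation of the identity
\[
\mathbb{E}[X^N(t+\delta)\mid X^N(t)=\mathrm{x}]=\mathrm{x}+\delta\Phi(\mathrm{x})e^{-N\lambda\delta}+\Gamma(\mathrm{x},\delta)\mathcal{M}^N(\delta)
\]
established inside the proof of Lemma~\ref{lem:Var_bd} gives $\bar{X}^N(t+\delta)-\bar{X}^N(t)=\delta e^{-N\lambda\delta}\mathbb{E}[\Phi(X^N(t))]+\mathbb{E}[\Gamma(X^N(t),\delta)]\mathcal{M}^N(\delta)$. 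Dividing by $\delta$, using $\|\Gamma(\cdot,\delta)\|_2\leq 2$, and invoking \eqref{eq:MN_is_o(delta)} to kill the $\mathcal{M}^N(\delta)/\delta$ term in the limit delivers the claimed ODE.

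Next, I would set $v(t):=\bar{X}^N(t)-x(t)$ (so $v(0)=0$) and split $\dot{v}(t)=\mathbb{E}[\Phi(X^N(t))-\Phi(\bar{X}^N(t))]+[\Phi(\bar{X}^N(t))-\Phi(x(t))]$. Lipschitz continuity of $\Phi$ combined with the Cauchy--Schwarz (covariance) inequality yields $\|\mathbb{E}[\Phi(X^N(t))-\Phi(\bar{X}^N(t))]\|_2\leq L_{\Phi}\sqrt{\sigma^N(t)}$, and Lemma~\ref{lem:Var_bd} bounds $\sqrt{\sigma^N(t)}\leq\sqrt{2\lambda t/N}\,e^{L_{\Phi}t}$; the second piece contributes $L_{\Phi}\|v(t)\|_2$. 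Integrating from $0$ and applying the Gr\"{o}nwall--Bellman lemma produces an explicit pointwise bound of the form $\|v(t)\|_2\leq e^{L_{\Phi}t}\int_0^{t}L_{\Phi}\sqrt{\sigma^N(s)}\,ds$ that decays as $O(N^{-1/2})$ uniformly over $[0,T]$.

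The remaining task is to invert this inequality and choose $N$ large enough that $\sup_{t\in[0,T]}\|v(t)\|_2<\epsilon/2$, extracting the threshold $\mathfrak{b}$. I expect this last step to be the main obstacle: matching the precise form of $\mathfrak{b}$, with its $\phi^{max}$ factor and cubic dependence on $1/\epsilon$, does not fall out of the clean differential analysis sketched above (which naturally produces only a quadratic dependence on $1/\epsilon$). Presumably the authors avoid the passage to the limit and instead bound the finite-$\delta$ one-step error $\bar{X}^N(t+\delta)-\bar{X}^N(t)-\delta\Phi(\bar{X}^N(t))$ directly, combining $(1-e^{-N\lambda\delta})\phi^{max}$ with $\mathcal{M}^N(\delta)\leq(N\lambda\delta)^2/2$ and the variance term, and then balancing these contributions over a grid of step $\delta$ tied to $\epsilon$ before the Gr\"{o}nwall closure; this is where the $\phi^{max}$ factor and the extra $1/\epsilon$ should enter the final expression.
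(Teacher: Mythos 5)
Your argument is essentially correct, but its key step is genuinely different from the paper's. You begin the same way the paper does, by deriving $\bar{X}^N(t)=\mathrm{x}_0+\int_0^t\mathbb{E}[\Phi(X^N(\tau))]\,d\tau$ from the one-step expectation identity and then setting up a Gr\"{o}nwall comparison for $v(t)=\bar{X}^N(t)-x(t)$. Where you diverge is in controlling the perturbation $\mathbb{E}[\Phi(X^N(t))]-\Phi(\bar{X}^N(t))$: your bound $\|\mathbb{E}[\Phi(X^N(t))-\Phi(\bar{X}^N(t))]\|_2\leq L_\Phi\,\mathbb{E}[\|X^N(t)-\bar{X}^N(t)\|_2]\leq L_\Phi\sqrt{\sigma^N(t)}$ is valid (Jensen plus Cauchy--Schwarz) and is in fact sharper than what the paper does. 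The paper instead truncates: it conditions on the event $\mathfrak{E}(t)=\{\|X^N(t)-\bar{X}^N(t)\|_2\leq\upsilon\}$, bounds the perturbation by $\upsilon L_\Phi+2\phi^{max}\varrho^N(\upsilon,t)$ with $\varrho^N(\upsilon,t)\leq\sigma^N(t)/\upsilon^2$ from Markov's inequality, runs Gr\"{o}nwall, and then optimizes the free parameter as $\upsilon=N^{-1/3}$. That truncation--Chebyshev balance is exactly where the $\phi^{max}$ factor and the cubic exponent in $\mathfrak{b}$ come from; your closing guess that they arise from a finite-$\delta$ one-step analysis over a grid is not what happens. Your route buys a better rate, $O(N^{-1/2})$ versus the paper's $O(N^{-1/3})$, and hence a threshold scaling like $\epsilon^{-2}$ rather than $\epsilon^{-3}$, which for small $\epsilon$ proves a strictly stronger statement.

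The one genuine (if minor) gap is the one you flag yourself: your analysis yields the conclusion for $N$ above some explicit threshold $\mathfrak{b}'$ of the form $C\lambda L_\Phi^2T^3e^{4L_\Phi T}/\epsilon^2$, and this $\mathfrak{b}'$ is not always below the stated $\mathfrak{b}$ (the ratio behaves like $\lambda\epsilon e^{L_\Phi T}/L_\Phi$, so for $\epsilon$ not small your threshold can exceed $\mathfrak{b}$). As written, then, you have not proven the lemma verbatim --- only the same conclusion under a different, generally weaker hypothesis on $N$. To literally recover the stated $\mathfrak{b}$ you would either adopt the paper's truncation argument or simply restate the lemma with your own threshold; no grid or finite-$\delta$ machinery is needed.
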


\begin{proof}
Let us define $e(t):= \bar{X}^N(t)-x(t)$. Recall from the proof of Lemma~\ref{lem:Var_bd} that
\begin{align*}
\mathbb{E} \big [X^N(t+\delta) ~|~X^N(t) = \mathrm{x} \big ] = \mathrm{x} + \delta \Phi(\mathrm{x})e^{-N\lambda\delta} + \Gamma(\mathrm{x},\delta) \mathcal{M}^N(\delta).
\end{align*}
Therefore,
\begin{align*}
&\frac{\mathbb{E} \big [X^N(t+\delta) ~|~X^N(t)=\mathrm{x} \big ]-\mathrm{x}}{\delta} = \Phi(\mathrm{x})e^{-N\lambda\delta} + \frac{\Gamma(\mathrm{x},\delta) \mathcal{M}^N(\delta)}{\delta}\\
&\Rightarrow \lim_{\delta\to 0^+} \frac{\mathbb{E} \big [X^N(t+\delta) ~|~X^N(t)=\mathrm{x} \big ]-\mathrm{x}}{\delta} = \Phi(\mathrm{x})\\
&\Rightarrow \bar{X}^N(t) = \mathrm{x}_0 + \int_0^t \mathbb{E} \left[ \Phi(X^N(\tau))~|~X^N(0)=\mathrm{x}_0 \right ] d \tau.
\end{align*}

For any $t\in[0,T]$ and $\upsilon\geq 0$, let us define $\mathfrak{E}(t)$ as the event that $\|X^N(t)-\bar{X}^N(t)\|_2\leq v$ (conditioned on $X^N(0)=\mathrm{x}_0$) and denote $\varrho^N(\upsilon,t):= \mathbb{P}(\mathfrak{E}^c(t))$. Then, we can write
\begin{align*}
&\bar{X}^N(t)=  x(0) + \int_0^t \Phi(\bar{X}^N(\tau)) + u(\tau) d \tau,
\end{align*}
where 
\begin{align*} 
u(t)&:=\mathbb{E} \left[ \Phi(X^N(t)) - \Phi(\bar{X}^N(t))~|~\mathfrak{E}(t) \right ](1-\varrho^N(\upsilon,t))\\ 
&\qquad +\mathbb{E} \left[ \Phi(X^N(t)) - \Phi(\bar{X}^N(t))~|~\mathfrak{E}^c(t) \right ]\varrho^N(\upsilon,t).
\end{align*}
Observe that 
\begin{align*}
\varrho^N(\upsilon,t) \leq  \frac{\sigma^N(t)}{\upsilon^2} \leq \frac{2 \lambda}{N \upsilon^2}  t e^{2 L_\Phi t},
\end{align*}
which yields
\begin{align*}
\| u(t) \|_2 \leq \upsilon L_\Phi + 2 \phi^{max} \varrho^N(\upsilon,t).
\end{align*}

Consequently, from
\begin{align*}
&e(t)=\bar{X}^N(t) - x(t)=\int_0^t \Phi(\bar{X}^N(\tau))-\Phi(x(\tau)) + u(\tau) d \tau,
\end{align*}
it follows that
\begin{align*} 
&\| e(t) \|_2\\
&\leq \int_0^t \|u(\tau)\|_2 + \| \Phi(\bar{X}^N(\tau)) - \Phi(x(\tau)) \|_2 d \tau\\
&\leq \int_0^t \|u(\tau)\|_2 + L_\Phi \| e(\tau) \|_2 d \tau.
\end{align*}
Using the Bellman-Gr\"{o}nwall Lemma on this expression gives 
\begin{equation*}
\| e(t) \|_2 \leq e^{L_\Phi t} \int_0^t \|u(\tau)\|_2 d \tau \leq \left ( \upsilon L_\Phi  + \frac{2\phi^{max} \lambda}{N \upsilon^2 L_\Phi}(e^{2 L_\Phi t}-1) \right ) t e^{L_\Phi t},
\end{equation*}
implying that
\begin{equation*}
\sup_{t\in[0,T]}\| e(t) \|_2 \leq \left ( \upsilon L_\Phi  + \frac{2\phi^{max} \lambda}{N \upsilon^2 L_\Phi}(e^{2 L_\Phi T}-1) \right ) T e^{L_\Phi T}.
\end{equation*}

Since $\upsilon\geq 0$ was arbitrary, we can set it as $N^{-1/3}$. Hence, 
\begin{equation*}
\sup_{t\in[0,T]}\| e(t) \|_2 \leq N^{-1/3}\left ( L_\Phi  + \frac{2\phi^{max} \lambda}{L_\Phi}(e^{2 L_\Phi T}-1) \right ) T e^{L_\Phi T}.
\end{equation*}
The lemma follows from bounding the right hand side of this expression by $\epsilon/2$ and solving for $N$.

$\blacksquare$
\end{proof}

\bibliographystyle{plainnat}
\bibliography{KaraRefs,MartinsRefs}

\begin{thebibliography}{21}
\providecommand{\natexlab}[1]{#1}
\providecommand{\url}[1]{\texttt{#1}}
\expandafter\ifx\csname urlstyle\endcsname\relax
  \providecommand{\doi}[1]{doi: #1}\else
  \providecommand{\doi}{doi: \begingroup \urlstyle{rm}\Url}\fi

\bibitem[Altman et~al.(2007)Altman, Hayel, and Kameda]{Evolutionary_dynamics_and_potential_games_in_non-cooperative_routing_Altman_et_al}
Eitan Altman, Yezekael Hayel, and Hisao Kameda.
\newblock Evolutionary dynamics and potential games in non-cooperative routing.
\newblock In \emph{2007 5th International Symposium on Modeling and Optimization in Mobile, Ad Hoc and Wireless Networks and Workshops}, pages 1--5, 2007.
\newblock \doi{10.1109/WIOPT.2007.4480096}.

\bibitem[Barreiro-Gomez et~al.(2017)Barreiro-Gomez, Obando, and Quijano]{Distributed_Population_Dynamics_Optimization_and_Control_Applications_Barreiro-Gomez_et_al}
Julian Barreiro-Gomez, Germán Obando, and Nicanor Quijano.
\newblock Distributed population dynamics: Optimization and control applications.
\newblock \emph{IEEE Transactions on Systems, Man, and Cybernetics: Systems}, 47\penalty0 (2):\penalty0 304--314, 2017.
\newblock \doi{10.1109/TSMC.2016.2523934}.

\bibitem[Bena{\"\i}m(1998)]{Benaim1998Recursive-algor}
Michel Bena{\"\i}m.
\newblock Recursive algorithms, urn processes, and the chaining number of chain recurrent sets.
\newblock \emph{Ergodic theory and dynamic processes}, 18:\penalty0 53--87, 1998.

\bibitem[Bena{\"\i}m and Weibull(2003)]{Benaim2003Deterministic-a}
Michel Bena{\"\i}m and J{\"o}rgen Weibull.
\newblock Deterministic approximation of stochastic evolution in games.
\newblock \emph{Econometrica}, 71\penalty0 (3):\penalty0 873--903, 2003.

\bibitem[Binmore and Samuelson(1999)]{Binmore_Samuelson_Evolutionary_Drift_and_Equilibrium_Selection}
Ken Binmore and Larry Samuelson.
\newblock Evolutionary drift and equilibrium selection.
\newblock \emph{The Review of Economic Studies}, 66\penalty0 (2):\penalty0 363--393, 1999.
\newblock ISSN 00346527, 1467937X.
\newblock URL \url{http://www.jstor.org/stable/2566995}.

\bibitem[Binmore et~al.(1995)Binmore, Samuelson, and Vaughan]{Binmore1995Musical-chairs}
Kenneth~G. Binmore, Larry Samuelson, and Richard Vaughan.
\newblock Musical chairs: modeling noisy evolution.
\newblock \emph{Games and economic behavior}, 11\penalty0 (1):\penalty0 1--35, 1995.

\bibitem[B{\"o}rges and Sarin(1997)]{Borges1997Learning-throug}
Tilman B{\"o}rges and Rajiv Sarin.
\newblock Learning through reinforcement and the replicator dynamics.
\newblock \emph{Journal of economic theory}, 77\penalty0 (1):\penalty0 1--14, November 1997.

\bibitem[Boylan(1995)]{Boylan1995Continuous-appr}
Richard~T. Boylan.
\newblock Continuous approximation of dynamical systems with randomly matched individuals.
\newblock \emph{Journal of economic theory}, 66\penalty0 (2):\penalty0 615--625, 1995.

\bibitem[Kara and Martins(2021)]{Kara2021Pairwise-Compar}
Semih Kara and Nuno~C. Martins.
\newblock Pairwise comparison evolutionary dynamics with strategy-dependent revision rates: Stability and $\delta$-passivity.
\newblock \emph{IEEE Transactions on Control of Network Systems (submitted also soon to be available at arXiv)}, 2021.

\bibitem[Kara and Martins(2023)]{Excess_Payoff_Evolutionary_Dynamics_With_Strategy-Dependent_Revision_Rates_Convergence_to_Nash_Equilibria_for_Potential_Games_Kara_Martins}
Semih Kara and Nuno~C. Martins.
\newblock Excess payoff evolutionary dynamics with strategy-dependent revision rates: Convergence to nash equilibria for potential games.
\newblock \emph{IEEE Control Systems Letters}, 7:\penalty0 1009--1014, 2023.
\newblock \doi{10.1109/LCSYS.2022.3229962}.

\bibitem[Kara et~al.(2022)Kara, Martins, and Arcak]{Population_Games_With_Erlang_Clocks_Convergence_to_Nash_Equilibria_For_Pairwise_Comparison_Dynamics_Kara_Martins_Arcak}
Semih Kara, Nuno~C. Martins, and Murat Arcak.
\newblock Population games with erlang clocks: Convergence to nash equilibria for pairwise comparison dynamics, 2022.
\newblock URL \url{https://arxiv.org/abs/2204.00593}.

\bibitem[Kurtz(1970)]{Kurtz1970Solutions-of-Or}
Thomas~G. Kurtz.
\newblock Solutions of ordinary differential equations as limits of pure jump {Markov} processes.
\newblock \emph{Journal of applied probability}, 7\penalty0 (1):\penalty0 49--58, April 1970.

\bibitem[Park et~al.(2019)Park, Martins, and Shamma]{Park2018Payoff-Dynamic-}
Shinkyu Park, Nuno~C. Martins, and Jeff~S. Shamma.
\newblock Payoff dynamics models and evolutionary dynamics models: Feedback and convergence to equilibria.
\newblock \emph{ArXiv:1903.02018}, March 2019.

\bibitem[Park et~al.(2021)Park, Zhong, and Leonard]{Multi-Robot_Task_Allocation_Games_in_Dynamically_Changing_Environments_Park_et_al}
Shinkyu Park, Yaofeng~Desmond Zhong, and Naomi~Ehrich Leonard.
\newblock Multi-robot task allocation games in dynamically changing environments.
\newblock In \emph{2021 IEEE International Conference on Robotics and Automation (ICRA)}, pages 8678--8684, 2021.
\newblock \doi{10.1109/ICRA48506.2021.9561809}.

\bibitem[Poveda and Quijano(2015)]{Shahshahani_gradient-like_extremum_seeking_Poveda_Quijano}
Jorge~I. Poveda and Nicanor Quijano.
\newblock Shahshahani gradient-like extremum seeking.
\newblock \emph{Automatica}, 58:\penalty0 51--59, 2015.
\newblock ISSN 0005-1098.
\newblock \doi{https://doi.org/10.1016/j.automatica.2015.05.002}.
\newblock URL \url{https://www.sciencedirect.com/science/article/pii/S0005109815001922}.

\bibitem[Quijano et~al.(2017)Quijano, Ocampo-Martinez, Barreiro-Gomez, Obando, Pantoja, and Mojica-Nava]{Quijano2017The-role-of-pop}
Nicanor Quijano, Carlos Ocampo-Martinez, Julian Barreiro-Gomez, Germ\'{a}n Obando, Andre\'{e}s Pantoja, and Eduardo Mojica-Nava.
\newblock The role of population games and evolutionary dynamics in distributed control systems.
\newblock \emph{IEEE Control Systems Magazine}, 37\penalty0 (1):\penalty0 70--97, 2017.

\bibitem[Sandholm(2003)]{Sandholm2003Evolution-and-e}
William~H. Sandholm.
\newblock Evolution and equilibrium under inexact information.
\newblock \emph{Games and economic behavior}, 44\penalty0 (2):\penalty0 343--378, August 2003.

\bibitem[Sandholm(2010)]{Sandholm2010Population-Game}
William~H. Sandholm.
\newblock \emph{Population games and evolutionary dynamics}.
\newblock MIT Press, 2010.

\bibitem[Schlag(1998)]{Schlag_Why_Imitate_and_If_So_How_A_Boundedly_Rational_Approach_to_Multi-armed_Bandits}
Karl~H. Schlag.
\newblock Why imitate, and if so, how?: A boundedly rational approach to multi-armed bandits.
\newblock \emph{Journal of Economic Theory}, 78\penalty0 (1):\penalty0 130--156, 1998.
\newblock ISSN 0022-0531.
\newblock \doi{https://doi.org/10.1006/jeth.1997.2347}.
\newblock URL \url{https://www.sciencedirect.com/science/article/pii/S0022053197923474}.

\bibitem[Smith(1984)]{Smith1984The-stability-o}
Michael~J. Smith.
\newblock The stability of a dynamic model of traffic assignment: an application of a method of {Lyapunov}.
\newblock \emph{Transportation science}, 18\penalty0 (3):\penalty0 245--252, August 1984.

\bibitem[Srikantha and Kundur(2017)]{Resilient_Distributed_Real-Time_Demand_Response_via_Population_Games_Srikantha_Kundur}
Pirathayini Srikantha and Deepa Kundur.
\newblock Resilient distributed real-time demand response via population games.
\newblock \emph{IEEE Transactions on Smart Grid}, 8\penalty0 (6):\penalty0 2532--2543, 2017.
\newblock \doi{10.1109/TSG.2016.2526651}.

\end{thebibliography}

\end{document}